\definecolor{webgreen}{rgb}{0,.5,0}
\definecolor{webbrown}{rgb}{.6,0,0}
\theoremstyle{plain}
\newtheorem{theorem}{Theorem}
\theoremstyle{definition}
\theoremstyle{remark}
\title{Least Periods of $k$-Automatic Sequences}
\author{Daniel Go\v{c} and Jeffrey Shallit\\
School of Computer Science \\
University of Waterloo \\
Waterloo, ON  N2L 3G1 \\
Canada\\
\tt{dgoc@cs.uwaterloo.ca}, \tt{shallit@cs.uwaterloo.ca} 
}
\begin{document}

\maketitle

\begin{abstract}
Currie and Saari initiated the study of
least periods of infinite words, and they showed that
every integer $n \geq 1$ is a least period of the
Thue-Morse sequence.  We generalize this
result to show that the characteristic sequence
of least periods of a $k$-automatic
sequence is (effectively) $k$-automatic.
Through an implementation of our construction, we
confirm the result of Currie and Saari, and we
obtain similar results for the period-doubling sequence,
the Rudin-Shapiro sequence, and the paperfolding
sequence.
\end{abstract}

\section{Introduction}

In a recent paper, Currie and Saari \cite{Currie&Saari:2009} initiated
the study of the least periods of infinite words.  If $x = a_1 \cdots a_n$
is a finite word, then we say $x$ has period $p \geq 1$ if 
$a_i = a_{i+p}$ for $1 \leq i \leq n-p$.  For example,
{\tt alfalfa} has period $3$ and {\tt entanglement} has period $9$.
If ${\bf x} = a_0 a_1 a_2 \cdots $ is an infinite word, then a {\it factor} of
$\bf x$ is a contiguous subword of the form $a_i a_{i+1} \cdots a_j$
for some $i, j$ with $0 \leq i \leq j+1$; we write it as
${\bf x}[i..j]$.  (If $i = j+1$ then the factor
is $\epsilon$, the empty string.)    

Currie and Saari were interested
in the set of all positive integers that can be the least period of some
finite nonempty factor of $\bf x$.  They explicitly computed the set
of least periods
for some famous infinite words, such as the Thue-Morse sequence.

The Thue-Morse sequence ${\bf t} = 0110100110010110 \cdots$ is defined by
letting ${\bf t}[n]$ be the sum of the bits in the binary expansion of
$n$, taken modulo $2$.  They proved that every positive integer can be
the least period of the Thue-Morse sequence.

The Thue-Morse sequence is one of a much larger class of infinite words
called ``automatic''.  Roughly speaking, an infinite word ${\bf x}$ is
$k$-automatic if there is a deterministic finite automaton taking the
base-$k$ representation of $n$ as input, with transitions leading to 
a state with output ${\bf x}[n]$.  For more details, see
\cite{Cobham:1972,Allouche&Shallit:2003}.

In this note, we prove that if $\bf x$ is $k$-automatic, then so is the
characteristic sequence of the least periods of $\bf x$.   Our method
gives an explicit way to construct the automaton accepting the
base-$k$ representation of the least periods of $\bf x$.
Using an implementation developed by the first author,
we then 
reprove the Currie-Saari result for Thue-Morse using a short
finite computation, and we find similar results for three other
classic sequences.

\section{The main result}

\begin{theorem}
If $\bf x$ is a $k$-automatic sequence, then the characteristic
sequence of least periods of $\bf x$ is (effectively) $k$-automatic.
\end{theorem}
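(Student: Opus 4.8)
The plan is to reduce the statement to the now-standard fact (see \cite{Allouche&Shallit:2003}) that a subset of $\mathbb{N}$ definable by a first-order formula in the structure $\langle \mathbb{N}, +, <, V_k\rangle$, augmented with the ``letter predicates'' of a $k$-automatic sequence, is itself $k$-automatic, with the recognizing automaton effectively computable from the formula. So first I would record the building-block relations that are $k$-automatic whenever $\mathbf{x}$ is: the order $m \le n$ and the graph of addition $m + p = s$, each read in parallel base $k$, and, for every letter $a$, the set $\{n : \mathbf{x}[n] = a\}$. Combining these, the relation $\mathrm{Eq}(m,p)$ asserting $\mathbf{x}[m] = \mathbf{x}[m+p]$, viewed as a predicate on the pair $(m,p)$, is $k$-automatic.

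Next I would express periodicity directly as formulas. Writing $\mathrm{Per}(i,j,p)$ for ``$p\ge 1$ is a period of the factor $\mathbf{x}[i..j]$'', I take
\[ \mathrm{Per}(i,j,p) :\equiv p \ge 1 \wedge \forall \ell\, \bigl( (i \le \ell \wedge \ell + p \le j) \implies \mathrm{Eq}(\ell,p) \bigr), \]
and then express ``$p$ is the least period of $\mathbf{x}[i..j]$'' by
\[ \mathrm{Least}(i,j,p) :\equiv \mathrm{Per}(i,j,p) \wedge \forall q\, \bigl( (1 \le q \wedge q < p) \implies \neg\,\mathrm{Per}(i,j,q) \bigr). \]
The set of least periods is then $P = \{ p : \exists i\, \exists j\, ( i \le j \wedge \mathrm{Least}(i,j,p)) \}$, and its characteristic sequence is exactly the object named in the theorem. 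One checks that the edge cases take care of themselves: a factor of length less than $p$ always has period equal to its own length, hence least period strictly below $p$, so such a factor never witnesses $p \in P$; thus no separate length hypothesis is needed, though one may conjoin $i + p \le j+1$ for clarity.

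Every syntactic operation appearing above preserves $k$-automaticity and is effective: conjunction and disjunction correspond to products of automata, negation to complementation, and each quantifier $\exists i$, $\exists j$, $\forall \ell$, $\forall q$ to projection onto the remaining coordinates followed by determinization (with $\forall$ realized as $\neg\exists\neg$). Applying these in the order dictated by the formula for $P$ therefore produces an automaton recognizing the base-$k$ representations of the least periods of $\mathbf{x}$, which proves the theorem.

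The genuine difficulty is not correctness but two technical matters. First, the index arithmetic in $\mathrm{Per}$ must be pinned down precisely — the comparison $i \le \ell \le j-p$ and the addition $\ell + p$ read simultaneously in base $k$ — along with the leading-zero and padding conventions used when several integers are consumed in parallel; getting these exactly right is what validates the building-block automata. Second, the universal quantifiers are implemented through complementation, so the alternation of the inner $\forall$ with the outer $\exists$ forces determinization steps that can, in the worst case, blow the state count up multiply-exponentially. This is a complexity cost rather than an obstruction to the proof: the construction stays effective, which is all the theorem asserts, and in the concrete sequences treated later the resulting automata remain small enough to build explicitly.
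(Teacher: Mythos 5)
Your proposal is correct and follows essentially the same route as the paper: express ``$p$ is the least period of some factor of $\mathbf{x}$'' as a first-order predicate built from addition, order, and the letter predicates of $\mathbf{x}$, and then invoke the effective definability-implies-automaticity machinery of \cite{Allouche&Rampersad&Shallit:2009,Charlier&Rampersad&Shallit:2011}. The only divergence is that the paper conjoins an explicit length condition on the witnessing factor (its $0 \le i+n \le j-1$) where you argue none is needed; your argument covers factors of length less than $p$ but not those of length exactly $p$, for which the period condition holds vacuously, so this choice affects only which convention of ``least period'' is captured, not the automaticity argument itself.
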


\begin{proof}
Using the method developed in \cite{Allouche&Rampersad&Shallit:2009,Charlier&Rampersad&Shallit:2011}, it suffices to construct a predicate
$L(n)$ that is true if $n$ is a least period and false otherwise,
using a logical language restricted to addition, subtraction, indexing
into $\bf x$, comparisons, logical operations, and the existential and
universal quantifiers.

It is easy to express the predicate $P$ that $n$ is a period of
the factor ${\bf x}[i..j]$, as follows:
\begin{eqnarray*}
P(n,i,j) &=& {\bf x}[i..j-n] = {\bf x}[i+n..j]  \\
&=& \forall \ t \text{ with $i \leq t \leq j-n$ we have } 
	{\bf x}[t] = {\bf x}[t+n] .
\end{eqnarray*}
Using this, we can express the predicate $LP$ that $n$ is the least
period of ${\bf x}[i..j]$:
$$ LP(n,i,j) = P(n,i,j) \text{ and } \forall n' < n \  \neg P(n',i,j).$$
Finally, we can express the predicate that $n$ is a least period
as follows
$$L(n) = \exists i, j \geq 0 \text{ with $0 \leq i+n \leq j-1$ }
	LP(n, i, j) .$$
The construction is effective, and there is an algorithm
that, given the automaton generating $\bf x$, will
produce an automaton generating the characteristic
sequence of least periods of $\bf x$.
\end{proof}	

\section{Computations}

Currie and Saari \cite[Thm.\ 2]{Currie&Saari:2009} proved

\begin{theorem}
For each integer $n \geq 1$, the Thue-Morse word has a factor of
period $n$.
\end{theorem}

We implemented the algorithm in
\cite{Goc&Henshall&Shallit:2012} to convert
the automaton generating a $k$-automatic sequence $\bf x$ to the automaton
accepting the characteristic sequence of least periods of $\bf x$.
Using this, we were able to verify the result above using a short
computation.  (In contrast, Currie and Saari used four pages of rather
intricate case reasoning.)

We also carried out the same computation for three other famous infinite
words:
\begin{itemize}
\item The period-doubling sequence ${\bf d} = d_0 d_1 \cdots$ defined
by $d_n = 1$ if ${\bf t}[n] \not= {\bf t}[n+1]$, and $0$ otherwise;
\item The paperfolding sequence $\bf p$, defined as the limit of the finite
words $p_0 = 0$ and $p_{i+1} = p_i \ 0 \ \overline{p_i}^R$, where
$\overline{0} = 1$, $\overline{1} = 0$, and $w^R$ denotes the mirror 
image or reversal of the word $w$;
\item The Rudin-Shapiro sequence ${\bf r} = r_0 r_1 \cdots$ defined
by counting the number of (possibly overlapping) occurrences
of $11$ in the binary representation of $n$, taken modulo $2$.
\end{itemize}

Our results can be summarized as follows:

\begin{theorem}
For each integer $n \geq 1$, the period-doubling sequence and the
Rudin-Shapiro sequence have a factor of least period $n$.

For the paperfolding sequence, the least periods are given by
the integers whose base-$2$ representations are accepted by
the automaton below.  The least omitted least period is $18$,
and there are infinitely many.  In the limit, exactly
$57/64$ of all integers are least periods of the paperfolding
sequence.

\begin{figure}[H]
\leavevmode
\def\epsfsize#1#2{1.8#1}
\centerline{\epsfbox{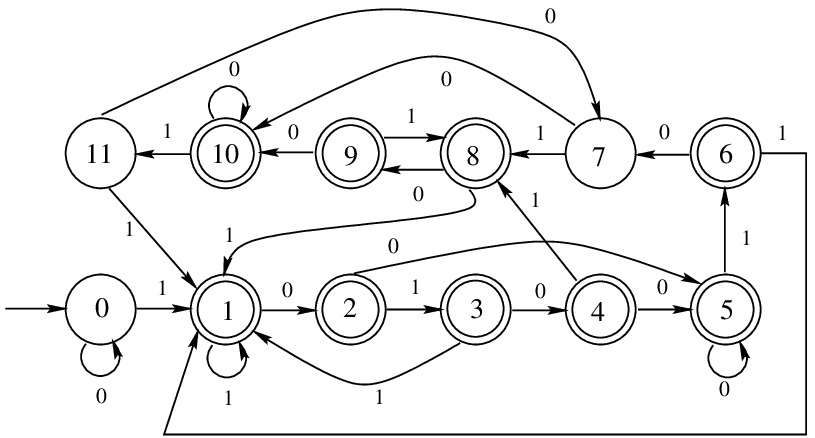}}
\protect\label{fig1}
\caption{A finite automaton accepting least periods of the paperfolding
sequence}
\end{figure}
\end{theorem}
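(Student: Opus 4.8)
The plan is to treat this statement as a purely computational consequence of Theorem 1. Each of the three target words is generated in base $2$ by a small, well-known finite automaton with output: the period-doubling word $\mathbf{d}$ and the Rudin--Shapiro word $\mathbf{r}$ are immediately $2$-automatic from their definitions, while $\mathbf{p}$ is generated by the standard (finite) paperfolding automaton. First I would feed each of these generating automata into the effective construction of Theorem 1, instantiating the predicate $L(n)$ for the word in question. Running the decision procedure of \cite{Allouche&Rampersad&Shallit:2009,Charlier&Rampersad&Shallit:2011,Goc&Henshall&Shallit:2012} then outputs, for each word, a deterministic finite automaton $M$ over $\{0,1\}$ that accepts exactly the base-$2$ representations of the least periods. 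All subsequent claims are then properties of these three automata, to be checked after minimization.

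For $\mathbf{d}$ and $\mathbf{r}$, the assertion that every $n \geq 1$ is a least period is equivalent to the statement that the minimized automaton $M$ accepts every nonempty binary string (under the representation conventions of the framework). I would verify this by minimizing $M$ and checking that it is the trivial automaton accepting all $n \ge 1$; equivalently, one runs an emptiness test on the complement of the language of $M$ restricted to $n \ge 1$ and confirms that no integer is rejected. This is a finite, mechanical check once the automaton is in hand.

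For $\mathbf{p}$ the automaton $M$ is nontrivial and is the one displayed in Figure~\ref{fig1}. The claim that $18$ is the least omitted least period is settled by directly testing acceptance of the base-$2$ representations of $n = 1, 2, \ldots, 18$ and confirming that exactly $18$ is rejected while all smaller values are accepted. For the density, I would use the fact that the characteristic sequence of least periods is itself $2$-automatic, hence a coding of a fixed point of a $2$-uniform morphism; the density of the set equals the frequency of the letter $1$ in this sequence, which is read off from the normalized eigenvector (for the eigenvalue $2$) of the incidence matrix of that morphism, equivalently from the stationary distribution of the uniform random walk on $M$. This computation yields the rational value $57/64$. Finally, the infinitude of the omitted least periods requires no separate argument: since the density of least periods is $57/64 < 1$, the complementary set of non-least-periods has density $7/64 > 0$ and is therefore infinite.

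The step I expect to be the genuine obstacle is the density computation, and specifically confirming that the natural density \emph{exists} as a true limit (not merely a $\limsup$/$\liminf$) and equals exactly $57/64$. Because the underlying morphism need not be primitive, existence of the limiting frequency is not automatic and must be justified by examining the strongly connected components of $M$ and, if necessary, Cesàro-averaging over the period introduced by the base-$2$ digit structure; the denominator $64 = 2^6$ signals that the acceptance behavior stabilizes within blocks governed by a bounded number of digits, which is what makes the exact value computable. By contrast, the remaining verifications (minimality, the all-accepting check for $\mathbf{d}$ and $\mathbf{r}$, and the finite scan up to $18$) are routine once the automata have been produced by the implementation.
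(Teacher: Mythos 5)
Your proposal is correct and follows essentially the same route as the paper: run the effective construction of Theorem~1 on each generating automaton, observe that the resulting (minimized) automata for the period-doubling and Rudin--Shapiro words accept all $n \geq 1$, and for the paperfolding word read off the least omitted period by a finite scan and obtain the density $57/64$ from the limit of the normalized transition matrix $2^{-n}(M_0+M_1)^n$ of the $12$-state automaton. Your added remarks---that infinitude of the omitted periods follows from the density being strictly less than $1$, and that existence of the limiting frequency needs justification---are sensible refinements of the same argument rather than a different approach.
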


\begin{proof}
The first results were obtained through our algorithm.  A summary
of our computations appears below:
\begin{table}[H]
\begin{center}
\begin{tabular}{cccc}
\hline
Sequence & Number of states in largest & Number of states in & Seconds of \\
name & intermediate automaton & final automaton & CPU time \\
\hline
Thue-Morse & 264 & 1 & 5.882 \\
Rudin-Shapiro & 1029 & 1 & 27.797 \\
Period-doubling & 89 & 1 & 4.327 \\
Paper-folding & 393 & 12 & 11.597\\
\hline
\end{tabular}
\end{center}
\end{table}

For the
result about the paperfolding sequence, we take the automaton
computed by the algorithm (displayed in Figure 1) and compute
the transition matrices $M_a$, $a \in \lbrace 0,1\rbrace$,
containing a $1$
in row $i$ and column $j$ if there is a transition on $a$ from
state $i$ to state $j$.   Then $(M^n)_{i,j}$, where
$M := M_0 + M_1$, gives the number of words taking the automaton
from state $i$ to state $j$.  A short computation gives
that each row of 
$\lim_{n \rightarrow \infty} 2^{-n} M^n $
equals 
$${1 \over 64}[0,16,8,4,2,10,5,4,4,2,6,3].$$
All states except $7$ and $11$ are accepting,
so the density of least periods is given by
$(64-4-3)/64 = 57/64$, as claimed.
\end{proof}

\end{document}